\documentclass[12pt]{article}

\usepackage{booktabs}       
\usepackage{nicefrac}       
\usepackage{microtype}      
\usepackage{amsfonts,amstext,amsmath,amssymb,amsthm}
\usepackage{smile}
\usepackage{algorithm}
\usepackage{algorithmic}
\usepackage{wrapfig}
\usepackage{lipsum}
\usepackage{ragged2e}
\usepackage{geometry}
\usepackage{setspace} 
\usepackage{tcolorbox}
\usepackage[symbol]{footmisc}
\renewcommand*{\thefootnote}{\fnsymbol{footnote}}

\usepackage[title,toc]{appendix}
\usepackage{minitoc}
\noptcrule

\usepackage{natbib}
\usepackage[utf8]{inputenc} 
\usepackage[T1]{fontenc}    
\usepackage{hyperref}   
\hypersetup{
	colorlinks=true,       
	linkcolor=blue,        
	citecolor=blue,        
	filecolor=magenta,     
	urlcolor=blue
}
\usepackage{url}            
\usepackage{booktabs}       
\usepackage{amsfonts}       
\usepackage{nicefrac}       
\usepackage{microtype}      
\usepackage[table,xcdraw]{xcolor}
\usepackage{algorithm}
\usepackage{algorithmic}
\usepackage[]{graphicx,float,latexsym}
\usepackage{amsfonts,amstext,amsmath,amssymb,amsthm}
\usepackage{wrapfig}
\usepackage{lipsum}
\usepackage{gensymb}
\usepackage{quoting,xparse}

\usepackage{xcolor}

\def\mytitle{{\it More Skills, Worse Agents?} Skill Shadowing \\ Degrades Performance When Expanding Skill Libraries}

\title{\mytitle}
\author{\\ Hongwen Song, \ Song (Vinson) Wei\footnote{Author correspondence to Song Wei, e-mail: \texttt{vinson.wei@databricks.com}}\\
\\
  \small{Databricks Inc.}
}
\date{\vspace{-20pt}}

\begin{document}

\maketitle

\vfill

\begin{abstract}
Skill libraries allow LLM agents to load task-specific instructions on demand, letting non-expert users solve domain-specific tasks through natural language without knowing which skills exist or how they work. However, performance degrades as libraries grow---by up to 21\% when scaling from a small set of helpful skills to a 202-skill library.
In this work, we formulate this performance degradation as the pass rate drop between loading a library of known-helpful skills and the full library. Moreover, we propose to decompose the pass rate drop by conditioning on the skill(s) invocation---which skills the agent selects during a trajectory---into two effects: \emph{skill shadowing}, where the agent selects wrong skills more often as the library expands, and \emph{context overhead}, where the enlarged context degrades execution even when selection is correct. We derive upper bounds on both effects to characterize their magnitudes of impacts to the pass rate drop. Our empirical estimates of the effects and their upper bounds both show that the \emph{skill shadowing} effect grows with library size and significantly contributes to the performance degradation, whereas the \emph{context overhead} effect remains small and indistinguishable from zero. This observed asymmetry establishes that the skill selection failure, not the enlarged context, is the primary bottleneck when expanding the skill libraries.
\end{abstract}

{\small \noindent\textbf{Keywords:} Agent skills, Agent degradation, Skill libraries, Skill selection, Skill shadowing, Context overhead.}

\vfill

\doparttoc 
\faketableofcontents 

\part{} 


\renewcommand*{\thefootnote}{\arabic{footnote}}

\newpage

\section{Introduction}\label{sec:intro}


LLM agents are increasingly extended not only by adding tool calls but also
by loading \emph{skills}---self-contained packages of instructions, scripts,
and reference documents that inject domain expertise into the agent's
context~\citep{xu2026agentskills}. The abstraction has rapidly become an
industry standard: Anthropic shipped Agent Skills for Claude and released
the specification as an open format, with partner-built skills from
Atlassian, Stripe, and Figma entering a curated
directory~\citep{anthropic2025skills}; OpenAI adopted the same
\texttt{SKILL.md} convention for Codex shortly
after~\citep{openai2025codexskills}.
The appeal is compositional: practitioners assemble dozens of skills into a
single agent, and foundational systems demonstrated that capability
libraries can be composed and selected at scale through language-based
interfaces---Voyager~\citep{wang2023voyager} incrementally builds a skill
library retrieved by embedding similarity, while
HuggingGPT~\citep{shen2023hugginggpt} selects among thousands of expert
models using their natural-language descriptions. The premise is
that a user can describe a task in plain language (``check why this
customer's cluster is slow'') and the agent autonomously selects the right
skill, without the user knowing which skills exist or how they work.
In our experiments, a 202-skill library cuts this premise short: pass
rate drops by 21\%, averaged across both models.


Yet most prior work evaluates skills \emph{one at a time}: how to author a
skill, retrieve one, or measure its marginal contribution over a no-skill
baseline~\citep{qin2023toolllm, patil2023gorilla, hao2023toolkengpt}.
Evidence from the more mature tool-use literature suggests this single-skill
view is dangerously incomplete. Tool selection accuracy drops from above
90\% with fewer than 30 candidates to 13.6\% with
11{,}100~\citep{gan2025ragmcp}, and in-context selection ``could completely
fail'' with numerous options~\citep{hao2023toolkengpt}. Skills amplify
these failure modes: they rely on natural-language descriptions with
broader, more overlapping scope than typed tool schemas, and a wrong skill
injects plausible but incorrect reasoning into the agent's context where it
is treated as authoritative~\citep{xu2026agentskills}---whereas a wrong
tool call is more likely to produce a detectable error (type mismatch,
missing parameter, API error code) due to its structured schema.


\textbf{Motivations.}
Despite growing awareness---recent work has documented degradation under progressively realistic skill settings~\citep{liu2026skillsinthewild} and shown that compressing skill content can yield a ``less-is-more''
effect~\citep{gao2026skillreducer}. As shown in Figure~\ref{fig:rq1-delta}, we evaluate on SkillsBench~\citep{li2026skillsbench} and observe monotonic pass rate drop when we load progressively larger skill libraries compared to only loading known-helpful skills (i.e., oracle skills). Our leading hypothesis is that \emph{certain skills in the library shadow the oracle skill}. For example:
\begin{tcolorbox}[colback=gray!10, colframe=gray!10, boxrule=0pt]               
  In the \texttt{mario-coin-counting} task, the library includes a skill named \texttt{video-frame-extraction}, whose description superficially matches the query better than the provided helpful skill. As a result, the agent invokes this wrong one in ALL $26$ trajectories. 
  \end{tcolorbox}

However, no prior work has formalized the library-induced degradation, nor has tried to decompose it into separable causal mechanisms through controlled experimentation. The challenge is that an agent trajectory---from receiving the task query, through skill selection and invocation, to producing a final output---is largely a black box: we observe the task outcome but not the intermediate decisions that produced it.
Without observing full intermediate states, we cannot attribute degradation to specific failure mechanisms.
   
\begin{figure}[!ht]         
    \centering
    \includegraphics[width=0.75\textwidth]{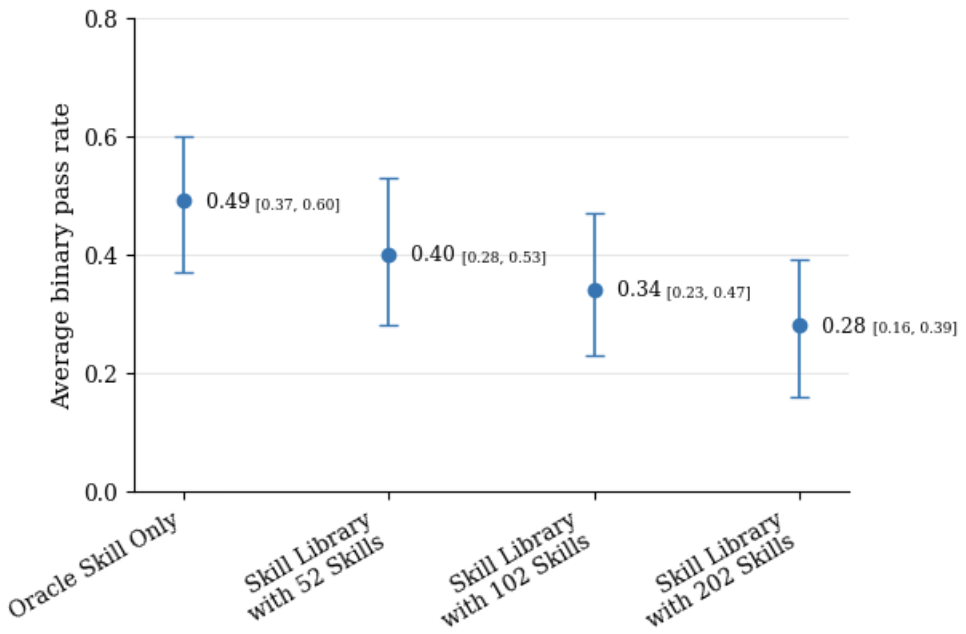}                              
    \caption{Pass-rate drop when expanding from oracle skills to larger libraries, averaged across $38$ (task, model) pairs, with $95\%$ percentile-bootstrap CI.}\label{fig:rq1-delta}
\end{figure}   



\textbf{Contributions.}
We propose to use \emph{skill invocations} as one such observable intermediate state.
We formalize the trajectory structure to define
invocation patterns, which enables an exact decomposition of library-induced degradation into two
separable effects---\emph{skill shadowing} and \emph{context
overhead}---with upper bounds derived on each individual effect (\S\ref{sec:shadowing}).
Our experiments show that skill shadowing dominates, accounting for up to 68\% of degradation and serving as the only statistically significant effect, while context overhead yields consistently reasonable but insignificant point estimates, remaining indistinguishable from noise (\S\ref{sec:experiments}).

\section{Formulation}\label{sec:setup}

We study \emph{skill-augmented} agents---LLM agents that select and load task-specific instruction sets from a shared library.
Unlike tool-augmented agents~\citep{qin2023toollearning,schick2023toolformer}, where invoking a tool executes code and returns a computed result, invoking a \emph{skill} injects instructional text into the agent's context window; the agent then drives task execution itself using base tools, guided by the injected instructions.

\subsection{Skill-Augmented Agents}

\begin{definition}[Skill]\label{def:skill}
A skill is a triple $S_i = (n_i, d_i, b_i)$ consisting of a name~$n_i$, a natural-language description~$d_i$, and a body~$b_i$ (the full instructions and examples that guide task execution).
The pair $(n_i, d_i)$ is the skill's \emph{interface}---the only information visible to the agent at selection time.
The body~$b_i$ is withheld until the skill is invoked.
\end{definition}

\begin{definition}[Skill library]\label{def:library}
A skill library is a finite set $\mathcal{S} = \{S_1, \ldots, S_\ell\}$.
\end{definition}

Our formulation adapts the agentic-skill tuple of \citet{sok2026skills}, who define a skill by its applicability condition, policy, termination condition, and callable interface.
Context-injection skills delegate execution entirely to the host agent, so the applicability condition, policy, and termination condition reduce to properties of the agent rather than the skill;
What remains is the callable interface (our $(n_i, d_i)$) and the instructional payload (our $b_i$).

\begin{definition}[Agent]\label{def:agent}
An agent $\mathcal{A}_\theta$ is a large language model with parameters~$\theta$ that, at each turn, selects an action conditioned on the current observation, prior interaction history, task~$q$, and the skill description $\{(n_i, d_i)\}_{i=1}^{\ell}$.
An action is one of: a \emph{skill invocation} $\texttt{Skill}(n_i)$, a \emph{base-tool call} (e.g., file read, shell command), or a \emph{termination signal}.
\end{definition}

\begin{definition}[Agent context]\label{def:context}
The \emph{agent context} is the complete input the agent receives before acting:
\begin{equation}\label{eq:context}
  \mathcal{C}(q, \mathcal{S}) = \bigl(q,\; \mathcal{D}(\mathcal{S})\bigr), \quad \mathcal{D}(\mathcal{S}) = \{(n_i, d_i)\}_{i=1}^{\ell}.
\end{equation}
The agent context is fully determined by the task and library, and is the only input that varies across experiments (\S\ref{sec:experiments}).
\end{definition}

This adapts the controller formulation of \citet{qin2023toollearning}, who decompose the LLM's decision as $p_\theta(a_t \mid x_t, H_t, q)$ over perceiver output~$x_t$, history~$H_t$, and query~$q$.
In our setting, the perceiver output is replaced by the skill description~$\mathcal{D}(\mathcal{S})$, and the history~$H_t$ is unobservable---we treat the trajectory as a black box from which only the invocation set~$\mathcal{I}$ and the final output are recorded.

\subsection{Formalization of the Trajectory}\label{sec:lifecycle}

\begin{definition}[Task]\label{def:task}
A task is a triple $\mathcal{T} = (q, r, v)$ consisting of a query~$q$ (a natural-language description of the objective), a reference solution~$r$ (a known-correct implementation that demonstrates solvability), and a verifier~$v$ (a deterministic test suite of independently evaluable subtasks).
\end{definition}

This follows the task structure of \citet{li2026skillsbench}, simplified by folding the environment into~$v$.
The agent observes only~$q$; both $r$ and~$v$ are withheld.
Since each task is uniquely identified by its query, we index by~$q$ for brevity throughout. Given $\mathcal{T} = (q, r, v)$ and $\mathcal{S}$, a trajectory proceeds in three phases:

\textbf{Phase~1: Context loading.}
The agent receives $q$ and $\mathcal{D}(\mathcal{S})$.
Skill bodies are withheld.

\textbf{Phase~2: Skill invocation to fulfill the task query.}
The agent acts for a bounded number of turns (or time).
At each turn, it emits an action: 2-(1) If the action is $\texttt{Skill}(n_i)$, the body~$b_i$ is appended to the agent's context (i.e., skill invocation); 2-(2) If the action is a base-tool call, the system executes it and returns the result. 

The agent may invoke multiple skills during a single trajectory.
Let $\mathcal{I} \subseteq \mathcal{S}$ denote the set of distinct skills invoked during the trajectory.
The intermediate actions, observations, and internal context accumulated during Phase~2 are difficult to observe.
We observe only the agent context~$\mathcal{C}(q, \mathcal{S})$, the invocation set~$\mathcal{I}$, and the final output evaluated by~$v$.

\textbf{Phase~3: Output.}
The trajectory terminates when the agent signals completion or reaches the turn (or time) limit.
The agent produces a final artifact (e.g., modified files, generated code) and the trajectory ends.

\subsection{Output Verification}\label{sec:metrics}

The verifier~$v$, external to the agent trajectory, evaluates the final environment state; the agent cannot observe or interact with~$v$.
We define the task completion random variable
\begin{equation}\label{eq:task-completion}
  T_q = \frac{\#\{\text{passed subtasks verified by } v\}}{m} \in \left\{0,\, \tfrac{1}{m},\, \ldots,\, 1\right\}.
\end{equation}
The \textit{binary pass rate} is the probability that all subtasks pass:
\begin{equation}\label{eq:pass-rate}
  p(q, \mathcal{A}_\theta, \mathcal{S}) = \Pr\!\left[T_q = 1 \mid q,\, \mathcal{A}_\theta,\, \mathcal{S}\right].
\end{equation}
The \textit{fractional pass rate} is:
\begin{equation}\label{eq:fractional-pass-rate}
  \bar{p}(q, \mathcal{A}_\theta, \mathcal{S}) = \mathbb{E}\!\left[T_q \mid q,\, \mathcal{A}_\theta,\, \mathcal{S}\right].
\end{equation}
When $m = 1$, $T_q \in \{0, 1\}$ and the two metrics coincide.
Since the agent~$\mathcal{A}_\theta$ is fixed throughout our experiments, we write $p(q, \mathcal{S})$ for brevity.

\section{Library-Induced Degradation: Decomposition and Upper Bounds}\label{sec:shadowing}

We formalize the two degradation mechanisms studied in this paper.
\emph{Skill shadowing} occurs when distractor skills displace oracle skills during selection, reducing the probability that the agent invokes a correct skill despite one being available.
\emph{Context overhead} occurs when the enlarged agent context degrades execution quality even when the agent selects correctly.
We define oracle and distractor skills, classify trajectories by their invocation pattern, and derive an exact decomposition of the pass-rate drop into these two effects.

\begin{definition}[Oracle skill set]\label{def:oracle}
For a task~$q$ and a skill-delta threshold~$\tau > 0$, the oracle skill set is the subset of skills whose marginal contribution over the no-skill baseline meets or exceeds~$\tau$:
\begin{equation}\label{eq:oracle}
  \mathcal{S}^\star(q) = \left\{S_i \in \mathcal{S} \;\middle|\; p(q, \{S_i\}) - p(q, \varnothing) \geq \tau\right\}.
\end{equation}
The oracle set is determined empirically: each skill is tested in isolation against~$q$, and only those whose pass-rate improvement over the no-skill baseline meets the threshold are included.
The set may contain multiple skills (when several skills independently improve performance on~$q$) or be empty (when no single skill provides sufficient benefit).
The value of $\tau$ and the empirical oracle-set sizes are reported in \S\ref{sec:experiments}.
\end{definition}

\begin{definition}[Distractor skills]\label{def:distractor}
For a task~$q$, all non-oracle skills $\mathcal{S} \setminus \mathcal{S}^\star(q)$ are distractors.
Distractors are not adversarially constructed; they are simply the skills in the library that do not meet the oracle threshold for~$q$.
\end{definition}

\begin{definition}[Invocation events]\label{def:invocation-events}
For a task~$q$ with $\mathcal{S}^\star(q) \neq \varnothing$, recall that a trajectory produces an invocation set~$\mathcal{I}$.
Every trajectory falls into one of three mutually exclusive events:

\begin{enumerate}
  \item[$\textsc{n}$.]
    \textbf{No skill invoked} ($\mathcal{I} = \varnothing$).
    The agent attempts the task with base tools alone, or fails to act.
  \item[$\textsc{m}$.]
    \textbf{Mixed invocation} ($\mathcal{I} \neq \varnothing$ and $\mathcal{I} \not\subseteq \mathcal{S}^\star(q)$).
    The agent invoked at least one skill, but not exclusively oracle skills---either no oracle was invoked at all, or an oracle was invoked alongside distractors.
  \item[$\textsc{o}$.]
    \textbf{Oracle-only invocation} ($\varnothing \neq \mathcal{I} \subseteq \mathcal{S}^\star(q)$).
    Every skill the agent invoked was an oracle skill.
\end{enumerate}
\end{definition}

\textbf{Library-induced pass-rate drop.}
To quantify the cost of library expansion, we compare the pass rate when only the oracle skill set is available (selection is trivial) against the pass rate under the full library:

\begin{definition}[Library-induced pass-rate drop]\label{def:delta}
For a task~$q$ with $\mathcal{S}^\star(q) \neq \varnothing$, the library-induced pass-rate drop is:
\begin{equation}\label{eq:delta}
  \Delta(q, \mathcal{S})
  = p\!\left(q,\, \mathcal{S}^\star(q)\right)
  \;-\;
  p(q,\, \mathcal{S}).
\end{equation}
\end{definition}

\textbf{Decomposition by invocation event.}
For tasks with $\mathcal{S}^\star(q) \neq \varnothing$, the three invocation events (Definition~\ref{def:invocation-events}) partition the trajectory space.
By the law of total probability:
\begin{equation}\label{eq:three-way}
  p(q, \mathcal{S})
  = \pi_{\textsc{n}}\,\rho_{\textsc{n}}
  + \pi_{\textsc{m}}\,\rho_{\textsc{m}}
  + \pi_{\textsc{o}}\,\rho_{\textsc{o}},
\end{equation}
where $\pi_{\textsc{n}}, \pi_{\textsc{m}}, \pi_{\textsc{o}}$ are the probabilities of events \textsc{n}, \textsc{m}, \textsc{o} given $(q, \mathcal{S})$, and $\rho_{\textsc{n}}, \rho_{\textsc{m}}, \rho_{\textsc{o}}$ are the corresponding conditional pass rates.
The probabilities of the three events sum to one, i.e., $\pi_{\textsc{n}} + \pi_{\textsc{m}} + \pi_{\textsc{o}} = 1$.

The same decomposition applies when the library contains only oracle skills, i.e., $\mathcal{S} = \mathcal{S}^\star(q)$.
Since no distractors are present, \textsc{m} is impossible, leaving only \textsc{n} and \textsc{o}.
We denote the event probabilities and conditional pass rates under $\mathcal{S} = \mathcal{S}^\star(q)$ with a star:
\begin{equation}\label{eq:oracle-decomp}
  p(q, \mathcal{S}^\star)
  = \pi_{\textsc{n}}^\star\,\rho_{\textsc{n}}^\star
  + \pi_{\textsc{o}}^\star\,\rho_{\textsc{o}}^\star,
\end{equation}
where $\pi_{\textsc{n}}^\star + \pi_{\textsc{o}}^\star = 1$.
Substituting Eqs.~\ref{eq:three-way} and~\ref{eq:oracle-decomp} into Eq.~\ref{eq:delta}:
\begin{equation}\label{eq:delta-decomp}
  \Delta(q, \mathcal{S})
  = \bigl(\pi_{\textsc{n}}^\star\,\rho_{\textsc{n}}^\star
  + \pi_{\textsc{o}}^\star\,\rho_{\textsc{o}}^\star\bigr)
  \;-\;
  \bigl(\pi_{\textsc{n}}\,\rho_{\textsc{n}}
  + \pi_{\textsc{m}}\,\rho_{\textsc{m}}
  + \pi_{\textsc{o}}\,\rho_{\textsc{o}}\bigr).
\end{equation}
Adding and subtracting $\pi_{\textsc{n}}^\star\,\rho_{\textsc{n}} + \pi_{\textsc{o}}^\star\,\rho_{\textsc{o}}$ yields:
\begin{align}\label{eq:two-effect}
  \Delta(q, \mathcal{S})
  &= \underbrace{%
      \pi_{\textsc{n}}^\star\bigl(\rho_{\textsc{n}}^\star - \rho_{\textsc{n}}\bigr)
      + \pi_{\textsc{o}}^\star\bigl(\rho_{\textsc{o}}^\star - \rho_{\textsc{o}}\bigr)
    }_{\displaystyle \Delta_{\mathrm{ctx}}} 
  \quad+\quad
  \underbrace{%
      \bigl(\pi_{\textsc{n}}^\star - \pi_{\textsc{n}}\bigr)\,\rho_{\textsc{n}}
      + \bigl(\pi_{\textsc{o}}^\star - \pi_{\textsc{o}}\bigr)\,\rho_{\textsc{o}}
      - \pi_{\textsc{m}}\,\rho_{\textsc{m}}
    }_{\displaystyle \Delta_{\mathrm{shd}}}.
\end{align}

\textbf{Context overhead effect} ($\Delta_{\mathrm{ctx}}$).
Event probabilities are held at their $\mathcal{S}^\star$ values; only the conditional pass rates change.
A positive $\rho_{\textsc{o}}^\star - \rho_{\textsc{o}}$ indicates that the full-library context degrades execution even when the agent selects the same skills.
The sole difference between conditions is the agent context (Definition~\ref{def:context}): $\mathcal{D}(\mathcal{S}^\star)$ contains $|\mathcal{S}^\star(q)|$ entries, while $\mathcal{D}(\mathcal{S})$ contains~$\ell$.

\textbf{Skill shadowing effect} ($\Delta_{\mathrm{shd}}$).
By analogy with variable shadowing in programming, distractor skills whose descriptions overlap with oracle skills can hide the correct choice, diverting the agent toward wrong or no skills.
Conditional pass rates are held at their full-library values; only the event probabilities change.
Under the full library, probability mass shifts from \textsc{o} into \textsc{m} and \textsc{n}, reducing performance.
Since $\pi_{\textsc{n}}^\star + \pi_{\textsc{o}}^\star = 1 = \pi_{\textsc{n}} + \pi_{\textsc{m}} + \pi_{\textsc{o}}$, we have $\pi_{\textsc{m}} = (\pi_{\textsc{n}}^\star - \pi_{\textsc{n}}) + (\pi_{\textsc{o}}^\star - \pi_{\textsc{o}})$, and the skill shadowing effect simplifies to:
\begin{equation}\label{eq:delta-shd-simplified}
  \Delta_{\mathrm{shd}}
  = \bigl(\pi_{\textsc{n}}^\star - \pi_{\textsc{n}}\bigr)
    \bigl(\rho_{\textsc{n}} - \rho_{\textsc{m}}\bigr)
  + \bigl(\pi_{\textsc{o}}^\star - \pi_{\textsc{o}}\bigr)
    \bigl(\rho_{\textsc{o}} - \rho_{\textsc{m}}\bigr).
\end{equation}

\begin{assumption}\label{asn:ctx}
For each invocation event $E \in \{\textsc{n}, \textsc{o}\}$, the conditional pass rate under $\mathcal{S} = \mathcal{S}^\star(q)$ is at least as high as under the full library:
\begin{equation}
  \rho_E^\star \geq \rho_E.
\end{equation}
\end{assumption}

\begin{lemma}[Upper bound on context overhead]\label{lem:ctx-bound}
Under Assumption~\ref{asn:ctx}, $\Delta_{\mathrm{ctx}} \geq 0$ and
\begin{equation}\label{eq:ctx-bound}
  \Delta_{\mathrm{ctx}}
  \leq \max\!\bigl(\rho_{\textsc{n}}^\star - \rho_{\textsc{n}},\;
                    \rho_{\textsc{o}}^\star - \rho_{\textsc{o}}\bigr).
\end{equation}
\end{lemma}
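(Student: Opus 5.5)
The plan is to read $\Delta_{\mathrm{ctx}}$ in Eq.~\ref{eq:two-effect} as a convex combination of the two per-event gaps. The weights $\pi_{\textsc{n}}^\star$ and $\pi_{\textsc{o}}^\star$ are probabilities under the oracle-only library, so they are nonnegative. They also satisfy $\pi_{\textsc{n}}^\star + \pi_{\textsc{o}}^\star = 1$, as stated after Eq.~\ref{eq:oracle-decomp}. To keep the notation light, write $g_{\textsc{n}} = \rho_{\textsc{n}}^\star - \rho_{\textsc{n}}$ and $g_{\textsc{o}} = \rho_{\textsc{o}}^\star - \rho_{\textsc{o}}$. Then $\Delta_{\mathrm{ctx}} = \pi_{\textsc{n}}^\star g_{\textsc{n}} + \pi_{\textsc{o}}^\star g_{\textsc{o}}$.

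For the lower bound, Assumption~\ref{asn:ctx} gives $g_{\textsc{n}} \geq 0$ and $g_{\textsc{o}} \geq 0$. A sum of products of nonnegative weights and nonnegative gaps is nonnegative, so $\Delta_{\mathrm{ctx}} \geq 0$ follows at once. For the upper bound, set $M = \max(g_{\textsc{n}}, g_{\textsc{o}})$. Bounding each gap by $M$ and using nonnegativity of the weights gives
\begin{equation*}
  \Delta_{\mathrm{ctx}} \leq \pi_{\textsc{n}}^\star M + \pi_{\textsc{o}}^\star M = \bigl(\pi_{\textsc{n}}^\star + \pi_{\textsc{o}}^\star\bigr) M = M,
\end{equation*}
which is Eq.~\ref{eq:ctx-bound}. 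This step does not need the assumption: a convex combination never exceeds its largest term.

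The only delicate point is definitional. The conditional pass rates $\rho_E$ and $\rho_E^\star$ are defined only when the conditioning event has positive probability. If, say, $\pi_{\textsc{n}}^\star = 0$ or $\pi_{\textsc{n}} = 0$, then $\rho_{\textsc{n}}^\star$ or $\rho_{\textsc{n}}$ is undefined. In that case I would adopt the convention that the corresponding term in Eq.~\ref{eq:two-effect} contributes zero, or equivalently that the maximum is taken only over events with positive probability under both libraries. I would add a brief remark on this. With that convention in place the argument is a two-line convexity argument, and I expect no substantive obstacle.
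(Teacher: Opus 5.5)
Your proof is correct and follows the paper's argument: both terms of $\Delta_{\mathrm{ctx}}$ are nonnegative by Assumption~\ref{asn:ctx}, and the upper bound holds because $\Delta_{\mathrm{ctx}}$ is a convex combination of the two gaps with weights summing to $\pi_{\textsc{n}}^\star+\pi_{\textsc{o}}^\star=1$. Your remark on conditional pass rates being undefined for zero-probability events handles an edge case the paper leaves implicit.
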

\begin{proof}
Both terms in $\Delta_{\mathrm{ctx}}$ are non-negative by Assumption~\ref{asn:ctx}.
Since $\pi_{\textsc{n}}^\star + \pi_{\textsc{o}}^\star = 1$, $\Delta_{\mathrm{ctx}}$ is a convex combination, bounded by the maximum of its components.
\end{proof}

\begin{lemma}[Upper bound on skill shadowing]\label{lem:shd-bound}
The skill shadowing effect satisfies
\begin{align}\label{eq:shd-bound}
  \bigl|\Delta_{\mathrm{shd}}\bigr|
  &\leq \bigl(|\pi_{\textsc{n}}^\star - \pi_{\textsc{n}}|
       + |\pi_{\textsc{o}}^\star - \pi_{\textsc{o}}|\bigr)
  \;\cdot\;
  \max\!\bigl(|\rho_{\textsc{n}} - \rho_{\textsc{m}}|,\;
              |\rho_{\textsc{o}} - \rho_{\textsc{m}}|\bigr) \notag\\
  &\leq |\pi_{\textsc{n}}^\star - \pi_{\textsc{n}}|
       + |\pi_{\textsc{o}}^\star - \pi_{\textsc{o}}|.
\end{align}
\end{lemma}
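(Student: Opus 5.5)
The plan is to work directly from the simplified form of the shadowing effect in Eq.~\ref{eq:delta-shd-simplified}, which writes $\Delta_{\mathrm{shd}}$ as a sum of two products. Each product is a difference in event probabilities times a difference in conditional pass rates. No assumption analogous to Assumption~\ref{asn:ctx} is needed: the bound is on $|\Delta_{\mathrm{shd}}|$, so signs do not matter. The argument is the triangle inequality followed by a Hölder-type ($\ell_1$--$\ell_\infty$) estimate.

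\textbf{Step 1 (first inequality).} Apply the triangle inequality to Eq.~\ref{eq:delta-shd-simplified}:
\begin{equation*}
  |\Delta_{\mathrm{shd}}| \leq |\pi_{\textsc{n}}^\star - \pi_{\textsc{n}}|\,|\rho_{\textsc{n}} - \rho_{\textsc{m}}| + |\pi_{\textsc{o}}^\star - \pi_{\textsc{o}}|\,|\rho_{\textsc{o}} - \rho_{\textsc{m}}|.
\end{equation*}
Then bound each $|\rho_E - \rho_{\textsc{m}}|$ by $M := \max\bigl(|\rho_{\textsc{n}} - \rho_{\textsc{m}}|, |\rho_{\textsc{o}} - \rho_{\textsc{m}}|\bigr)$ and factor $M$ out. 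This gives the first line of Eq.~\ref{eq:shd-bound}.

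\textbf{Step 2 (second inequality).} It suffices to show $M \leq 1$. Every conditional pass rate $\rho_E$ is a conditional probability of the event $T_q = 1$, so it lies in $[0,1]$. Hence any difference of two such rates has absolute value at most $1$, and the second line follows.

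\textbf{Main obstacle.} The only delicate point is that the conditional rates $\rho_{\textsc{n}}, \rho_{\textsc{m}}, \rho_{\textsc{o}}$ are undefined when the corresponding event has probability zero; for example, $\pi_{\textsc{m}} = 0$ leaves $\rho_{\textsc{m}}$ undefined. I would handle this by fixing any value in $[0,1]$ for an undefined conditional rate. The three-way decomposition in Eq.~\ref{eq:three-way} multiplies each rate by its event probability, so such a choice does not change $p(q,\mathcal{S})$. The derivation of Eq.~\ref{eq:delta-shd-simplified} is purely algebraic and remains valid under this convention, and so do Steps 1 and 2.
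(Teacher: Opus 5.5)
Your proof is correct and matches the paper's argument: apply the triangle inequality to Eq.~\ref{eq:delta-shd-simplified}, factor out the maximum conditional pass-rate gap, and then use $|\rho_E - \rho_{\textsc{m}}| \leq 1$ because every conditional pass rate lies in $[0,1]$. Your convention of assigning any value in $[0,1]$ to the conditional rate of a zero-probability event is a sensible clarification that the paper leaves implicit, and the bound holds under any such choice.
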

\begin{proof}
Apply the triangle inequality to Eq.~\ref{eq:delta-shd-simplified} and factor out the maximum absolute conditional pass-rate difference.
The second inequality follows because $|\rho_E - \rho_{\textsc{m}}| \leq 1$.
\end{proof}


\textbf{Interpretation.}
The two bounds serve as structural diagnostics: if the full-library context barely changes execution quality within each invocation event, the bound---and therefore $\Delta_{\mathrm{ctx}}$ itself---must be small, regardless of sample size.
A small bound thus disambiguates ``undetectable due to noise'' from ``genuinely small.''
The shadowing bound factors into two interpretable components: \emph{selection instability} (how much the invocation-event distribution shifts under the full library) and \emph{selection stakes} (how much pass rate differs across events).
Shadowing can be large only when both factors are large: the agent's selection must shift \emph{and} the shift must move mass toward lower-performing events.
We evaluate these bounds numerically in Appendix~\ref{app:bounds}.

\section{Experiments}\label{sec:experiments}

We design controlled experiments to answer the remaining question: what drives the library-induced degradation?

\textbf{Dataset.}
We evaluate on SkillsBench~\citep{li2026skillsbench}, a benchmark of $88$ tasks spanning document processing, finance, scientific computing, data analysis, etc.
Each task has a corresponding skill that could be helpful. 
We include only tasks whose corresponding skill yields at least a $4\%$ pass-rate uplift over the no-skill baseline, giving us $38$ (task, model) pairs with oracle skills ($21$ tasks from Haiku~4.5, and $17$ from Sonnet~4.6).
For each pair we evaluate under a baseline where the library
contains only oracle skills ($\mathcal{S}{=}\mathcal{S}^\star$) and three expanded libraries of total size $|\mathcal{S}| \in \{52, 102, 202\}$.
The final dataset comprises $2{,}545$ trajectories.
Further details, e.g., task selection, full trajectory counts, library construction, and bootstrap, are deferred to Appendix~\ref{app:protocol}.

\textbf{Decomposition and scaling of the pass-rate drop.}
For each library size we classify trajectories into the three invocation
events (Definition~\ref{def:invocation-events}) and report $\Delta_{\mathrm{ctx}}$,
$\Delta_{\mathrm{shd}}$, and $\Delta$ (Eq.~\ref{eq:two-effect})
in Table~\ref{tab:rq2-decomp}:
Both $\Delta_{\mathrm{ctx}}$ and
$\Delta_{\mathrm{shd}}$ grow with library size, with
$\Delta_{\mathrm{shd}}$ accounting for an increasing share of
$\Delta$.
The behavioral shifts underlying this scaling are visible in
Table~\ref{tab:rq2-invoc}, which reports the fraction of trajectories
in each of four invocation patterns---``oracle only''
(event~\textsc{o}), ``mixed -- oracle invoked'' and ``mixed --
oracle not invoked'' (sub-cases of event~\textsc{m}), and ``no skill
invoked'' (event~\textsc{n}).
As the library grows, the event~\textsc{o} share collapses, with mass
migrating into events~\textsc{n} and~\textsc{m}.
Two complementary failure modes are visible:
\emph{abandonment} (mass into event~\textsc{n}) and
\emph{wrong selection} (mass into event~\textsc{m}).
Additional shadowing examples appear in Appendix~\ref{app:shadowing-examples} and empirical studies of invocation pattern shift can be found in Appendix~\ref{app:invocation-dist}.

\vspace{-0.1in}
\begin{table}[!ht]
  \centering
  \footnotesize
  \setlength{\tabcolsep}{10pt}
  \caption{Decomposition of $\Delta$
  (Definition~\ref{def:delta}) into $\Delta_{\mathrm{ctx}}$ and
  $\Delta_{\mathrm{shd}}$ via Eq.~\ref{eq:two-effect}, averaged across
  both models, with $95\%$ two-stage clustered-bootstrap CIs.
  $\Delta_{\mathrm{ctx}} + \Delta_{\mathrm{shd}} = \Delta$ exactly.
  \textbf{Bold} marks CIs that exclude zero.}
  \label{tab:rq2-decomp}
  \vspace{0.1in}
  \begin{tabular}{l c c c}
    \toprule
    $|\mathcal{S}|$ & $\Delta_{\mathrm{ctx}}$ & $\Delta_{\mathrm{shd}}$ & $\Delta$ \\
    \midrule
    $52$  & $.06\;[-.11, .22]$ & $.03\;[-.01, .07]$ & $\mathbf{.08\;[.02, .15]}$ \\
    $102$ & $.06\;[-.11, .23]$ & $\mathbf{.08\;[.02, .15]}$ & $\mathbf{.14\;[.09, .19]}$ \\
    $202$ & $.07\;[-.13, .25]$ & $\mathbf{.14\;[.06, .26]}$ & $\mathbf{.21\;[.15, .27]}$ \\
    \bottomrule
  \end{tabular}
\end{table}

\textbf{Which effect dominates the drop?}
$\Delta_{\mathrm{shd}}$ is the only component whose CI excludes zero
and accounts for the majority of $\Delta$ at the largest library.
$\Delta_{\mathrm{ctx}}$ does not separate from zero at any library
size, although its point estimate is consistently positive.
Within event~\textsc{o}---where the agent selected
correctly---the binary pass rate still falls as the library grows
(see Tables~\ref{tab:rq4-2-binary} and \ref{tab:rq4-2-fraction} in Appendix~\ref{app:finer-partition}),
suggesting context overhead is real but our sample size is too small to confirm it.
Based on the empirical observation, we conclude that shadowing is the dominant effect of the drop, and the context-overhead point estimates as suggestive only.
The theoretical bounds of Lemmas~\ref{lem:ctx-bound} and~\ref{lem:shd-bound} corroborate this conclusion: the context
overhead bound remains small and indistinguishable from zero, while the shadowing bound grows sharply and is significantly positive throughout; see Appendix~\ref{app:bounds} for complete details.

\vspace{-0.1in}
\begin{table}[!ht]
  \centering
  \footnotesize
  \setlength{\tabcolsep}{4pt}
  \caption{Trajectory-share distribution across the four invocation
  patterns (refinement of Definition~\ref{def:invocation-events}'s partition; see Appendix~\ref{app:finer-partition} for details).}
  \label{tab:rq2-invoc}
  \vspace{0.1in}
  \begin{tabular}{c rrrr r}
    \toprule
     & \makecell{oracle\\only} & \makecell{mixed --\\oracle invoked} & \makecell{mixed --\\oracle not invoked} & \makecell{no skill\\invoked} & $n$ \\
    \midrule
    $\mathcal{S} = \mathcal{S}^\star$   & $88.0\%$ & $0.0\%$ & $0.0\%$ & $12.0\%$ & $552$ \\
    $|\mathcal{S}| = 52$  & $80.3\%$ & $0.9\%$ & $2.8\%$ & $16.1\%$ & $542$ \\
    $|\mathcal{S}| = 102$ & $64.4\%$ & $1.4\%$ & $5.9\%$ & $28.3\%$ & $988$ \\
    $|\mathcal{S}| = 202$ & $52.6\%$ & $1.6\%$ & $7.3\%$ & $38.5\%$ & $494$ \\
    \bottomrule
  \end{tabular}
\end{table}

\section{Discussion}\label{sec:discussion}

Our results point to skill shadowing as the primary bottleneck when scaling a skill library: once semantically similar skills compete for selection, accuracy degrades regardless of the underlying model.
A natural next step is to attack this bottleneck directly through retrieval-based pre-filtering~\citep{gan2025ragmcp, wang2025toolgen}, description disambiguation~\citep{gao2026skillreducer}, learned routing~\citep{jia2026autotool}, or other techniques. In addition, our study covers $202$ skills on two models from a single provider, a scale at which context overhead remains modest.
Whether these patterns---and the mitigations above---hold as libraries grow by an order of magnitude or span diverse architectures remains an open question.

\section*{Acknowledgment}

We thank Zubair Anwar, Juanyan Li and Martin Valdez-Vivas for their valuable feedback and discussions throughout this work. Moreover, we truly appreciate the Data Science Leaders Sam Shah, Feng Pan, Zubair Anwar, and Divy Menghani for their support.

\bibliographystyle{plainnat}
\bibliography{corrected_ref_v1}

\appendices

\addcontentsline{toc}{section}{Appendix} 
\part{\centering \LARGE Appendix} 

\topskip0pt

\parttoc 


\section{Experiment Configurations}\label{app:protocol}

Definition~\ref{def:oracle} defines the oracle set at the individual-skill level, but in practice skills are often bundled together: multiple skills may jointly serve as the oracle set for a given task. In Table~\ref{tab:shadowing-examples}, \texttt{trend-anomaly-causal-inference} is a good example.

\begin{table*}[!ht]
  \centering
  \scriptsize
  \setlength{\tabcolsep}{0.1pt}
  \renewcommand{\arraystretch}{1.05}
  \caption{Frequent shadowing pairs at $|\mathcal{S}|{=}202$
  ($494$ trajectories across the $38$ matched (task, model) pairs).
  $\#_{\text{shd traj}}$ is the count of trajectories in which the agent
  first-picks the distractor skill.}
  \label{tab:shadowing-examples}
  \vspace{0.1in}
  \begin{tabular}{@{}>{\raggedright\arraybackslash}p{4cm} >{\raggedright\arraybackslash}p{4cm} >{\raggedright\arraybackslash}p{3.5cm} >{\raggedright\arraybackslash}p{4.5cm} r@{}}
    \toprule
    Primary task & Oracle bundle & distractor skill & Origin task(s) of distractor skill & $\#_{\text{shd traj}}$ \\
    \midrule
    \texttt{mario-coin-counting}
      & \makecell[l]{\texttt{ffmpeg},\\\texttt{image\_editing},\\\texttt{object\_counter}}
      & \texttt{video-frame-extraction}
      & \makecell[l]{\texttt{jpg-ocr-stat},\\\texttt{pedestrian-traffic-counting}}
      & $51$ \\
    \addlinespace
    \texttt{trend-anomaly-} \texttt{causal-inference} 
      & \makecell[l]{\texttt{data\_cleaning},\\\texttt{did\_causal\_analysis},\\\texttt{feature\_engineering},\\\texttt{time\_series\_anomaly\_detection}}
      & \texttt{senior-data-scientist}
      & \texttt{powerlifting-coef-calc}
      & $37$ \\
    \addlinespace
    \texttt{manufacturing-} \texttt{fjsp-optimization}
      & \makecell[l]{\texttt{fjsp-baseline-repair-}\\\texttt{with-downtime-and-policy}}
      & \makecell[l]{\texttt{reflow\_machine\_}\\\texttt{maintenance\_guidance}}
      & \texttt{manufacturing-equipment-} \texttt{maintenance}
      & $14$ \\
    \addlinespace
    \texttt{econ-detrending-} \texttt{correlation}
      & \texttt{timeseries-detrending}
      & \texttt{csv-processing}
      & \texttt{adaptive-cruise-control}
      & $4$ \\
    \addlinespace
    \texttt{energy-market-pricing}
      & \makecell[l]{\texttt{dc-power-flow},\\\texttt{economic-dispatch},\\\texttt{locational-marginal-prices},\\\texttt{power-flow-data}}
      & \texttt{casadi-ipopt-nlp}
      & \texttt{energy-ac-optimal-power-flow}
      & $3$ \\
    \bottomrule
  \end{tabular}
\end{table*}

In SkillsBench~\citep{li2026skillsbench}, we have \emph{skill bundle} (or just bundle for brevity) authored by task creators---they are hand-crafted set of skills that the task author considers helpful for solving the given task. We call this set the task's \emph{authored primary skill bundle}, or just \emph{authored (skill) bundle} for brevity. Bundles range from $1$ to $7$ skills (median~$2$, mean~$2.7$); they are provided in the benchmark, not by this work.

\subsection{Tasks included in our experiments}
We only include a (task, model) pair into our experiment if every individual skill in the authored bundle clears the empirical threshold $\tau = 0.04$ per our Definition~\ref{def:oracle}, so that we can ``break the bundle structure'' and ensure each individual skill helps solve the task. The rationale is simple: if the skill does not help, one probably could not observe performance degradation even with more skills in the library. 
This filtering leaves us $38$ (task, model) pairs: $21$ Haiku and $17$ Sonnet. 
From now on, the authored bundle serves as the oracle skill set $\mathcal{S}^\star(q)$, and we use ``oracle skill set'' and ``oracle bundle'' interchangeably below.
Our experiments consist of $2{,}545$ trajectories:
with $303/297/544/273$ from Haiku and
$233/233/441/221$ from Sonnet for $\mathcal{S} = \mathcal{S}^\star$, and
$|\mathcal{S}| \in \{52, 102, 202\}$, respectively.

\subsection{Library construction}
The largest library ($|\mathcal{S}|{=}202$) is the union of all $88$
authored bundles after byte-identical de-duplication and
numeric-suffix disambiguation for same-name skills with divergent bodies.

\subsection{Bootstrap}
Confidence intervals come from $95\%$ percentile of the bootstrap distribution.
For the decomposition (Eq.~\ref{eq:two-effect}): at each bootstrap trial, we first resample (task, model) pairs with replacement, then resample trajectories within each sampled pair.

\subsection{Examples with skill shadowing effect}\label{app:shadowing-examples}

Table~\ref{tab:shadowing-examples} lists the most frequent empirical examples with skill shadowing effect at $|\mathcal{S}|{=}202$, ranked by the number of
trajectories in which the distractor skill is the agent's first pick.
The first row is the \texttt{mario-coin-counting} example in our introduction
\S\ref{sec:intro}.

\section{Additional Experiments}

\subsection{Per-model pass rate drops}\label{app:per-model-rq1}

Table~\ref{tab:rq1-delta-full} presents the overall pass rate drops with a per-model breakdown, where we can observe both models degrade as the library grows.
Haiku's drop separates from zero already at $|\mathcal{S}|{=}52$;
Sonnet's only at $|\mathcal{S}|{=}202$.

\begin{table}[!ht]
  \centering
  \footnotesize
  \setlength{\tabcolsep}{20pt}
  \caption{Library-induced pass-rate drop $\Delta$
  (Definition~\ref{def:delta}) by model and library size, with
  $95\%$ percentile-bootstrap CI. \textbf{Bold} marks CIs excluding
  zero.}
  \label{tab:rq1-delta-full}
  \vspace{0.1in}
  \begin{tabular}{@{}l cc@{}}
    \toprule
    $|\mathcal{S}|$ & Haiku 4.5 & Sonnet 4.6 \\
    \midrule
    $52$  & $\mathbf{.10\;[.03, .18]}$ & $.06\;[-.03, .15]$ \\
    $102$ & $\mathbf{.22\;[.16, .29]}$ & $.05\;[-.03, .13]$ \\
    $202$ & $\mathbf{.26\;[.20, .33]}$ & $\mathbf{.15\;[.06, .24]}$ \\
    \bottomrule
  \end{tabular}
\end{table}

\begin{table*}[!ht]
  \centering
  \footnotesize
  \setlength{\tabcolsep}{10pt}
  \caption{Per-model decomposition of $\Delta$
  (Definition~\ref{def:delta}) into $\Delta_{\mathrm{ctx}}$ and
  $\Delta_{\mathrm{shd}}$ (Eq.~\ref{eq:two-effect}), with $95\%$
  joint two-stage clustered-bootstrap CIs
  ($2{,}000$ trials, random seed is~$42$).
  $\Delta_{\mathrm{ctx}}+\Delta_{\mathrm{shd}}=\Delta$ on every
  point estimate. \textbf{Bold} marks CIs excluding zero.
  Pooled estimates appear in Table~\ref{tab:rq2-decomp}.}
  \label{tab:rq2-decomp-full}
  \vspace{0.1in}
  \begin{tabular}{@{}l c c c c@{}}
    \toprule
    View & $|\mathcal{S}|$ & $\Delta_{\mathrm{ctx}}$ & $\Delta_{\mathrm{shd}}$ & $\Delta$ \\
    \midrule
    \multirow{3}{*}{Haiku 4.5}
      & $52$  & $.09\;[-.10, .27]$ & $.02\;[-.04, .07]$         & $.10\;[-.08, .28]$         \\
      & $102$ & $.15\;[-.04, .34]$ & $.07\;[.00, .16]$          & $\mathbf{.22\;[.05, .39]}$ \\
      & $202$ & $.09\;[-.18, .36]$ & $\mathbf{.17\;[.02, .37]}$ & $\mathbf{.26\;[.09, .43]}$ \\
    \addlinespace
    \multirow{3}{*}{Sonnet 4.6}
      & $52$  & $.05\;[-.22, .33]$ & $.01\;[-.07, .10]$         & $.06\;[-.21, .33]$         \\
      & $102$ & $.03\;[-.23, .30]$ & $.02\;[-.06, .10]$         & $.05\;[-.21, .31]$         \\
      & $202$ & $.11\;[-.16, .40]$ & $.04\;[-.04, .15]$         & $.15\;[-.11, .41]$         \\
    \bottomrule
  \end{tabular}
\end{table*}

Table~\ref{tab:rq2-decomp-full} presents the estimates on both effects with a per-model breakdown:
For Haiku, $\Delta_{\mathrm{shd}}$ dominates and its CI excludes zero at the largest library, whereas for Sonnet, $\Delta_{\mathrm{ctx}}$ matches or exceeds
$\Delta_{\mathrm{shd}}$ at every size, but we do not have necessary statistical power to distinguish from zero.

\subsection{Theoretical upper bounds}\label{app:bounds}

We substitute the empirical $(\pi^\star_E, \rho^\star_E,
\pi_E, \rho_E)$ from the decomposition into the right-hand sides of
Lemma~\ref{lem:ctx-bound} (Eq.~\ref{eq:ctx-bound}) and
Lemma~\ref{lem:shd-bound} (tighter form of Eq.~\ref{eq:shd-bound}).
CIs are obtained by recomputing each bound on every bootstrap trial and taking the proper percentiles.

\begin{table}[!h]
  \centering
  \footnotesize
  \setlength{\tabcolsep}{10pt}
  \caption{Upper bounds $\Delta_{\mathrm{ctx}}^{\sup}$ and
  $\Delta_{\mathrm{shd}}^{\sup}$ from Lemmas~\ref{lem:ctx-bound}
  and~\ref{lem:shd-bound} (tighter form), with $95\%$ joint two-stage
  clustered-bootstrap CIs from the same iterations as
  Table~\ref{tab:rq2-decomp-full}
  ($2{,}000$ trials, random seed is~$42$).
  \textbf{Bold} marks CIs excluding zero.}
  \label{tab:rq3-bounds-full}
  \vspace{0.1in}
  \begin{tabular}{@{}l cc cc cc@{}}
    \toprule
    & \multicolumn{2}{c}{Haiku 4.5} & \multicolumn{2}{c}{Sonnet 4.6} & \multicolumn{2}{c}{Pooled} \\
    \cmidrule(lr){2-3}\cmidrule(lr){4-5}\cmidrule(lr){6-7}
    $|\mathcal{S}|$ & $\Delta_{\mathrm{ctx}}^{\sup}$ & $\Delta_{\mathrm{shd}}^{\sup}$
                    & $\Delta_{\mathrm{ctx}}^{\sup}$ & $\Delta_{\mathrm{shd}}^{\sup}$
                    & $\Delta_{\mathrm{ctx}}^{\sup}$ & $\Delta_{\mathrm{shd}}^{\sup}$ \\
    \midrule
    $52$  & $.10\;[-.06, .36]$ & $\mathbf{.04\;[.01, .19]}$ & $.06\;[-.20, .33]$ & $.16\;[.00, .44]$         & $.08\;[-.08, .29]$ & $\mathbf{.06\;[.01, .18]}$ \\
    $102$ & $.17\;[.00, .41]$  & $\mathbf{.09\;[.02, .32]}$ & $.04\;[-.21, .30]$ & $\mathbf{.17\;[.01, .43]}$ & $.07\;[-.08, .31]$ & $\mathbf{.14\;[.04, .29]}$ \\
    $202$ & $.10\;[-.04, .43]$ & $\mathbf{.35\;[.08, .74]}$ & $.12\;[-.14, .41]$ & $\mathbf{.15\;[.01, .37]}$ & $.07\;[-.06, .34]$ & $\mathbf{.24\;[.09, .46]}$ \\
    \bottomrule
  \end{tabular}
\end{table}

Table~\ref{tab:rq3-bounds-full} reports the two upper bounds, where we denote $\Delta_{\mathrm{ctx}}^{\sup} = \max\!\bigl(\rho_{\textsc{n}}^\star - \rho_{\textsc{n}},\; \rho_{\textsc{o}}^\star - \rho_{\textsc{o}}\bigr)$, and
$\Delta_{\mathrm{shd}}^{\sup} = |\pi_{\textsc{n}}^\star - \pi_{\textsc{n}}|
       + |\pi_{\textsc{o}}^\star - \pi_{\textsc{o}}|$.
We observe $\Delta_{\mathrm{ctx}}^{\sup}$ remains small across all conditions, and no
CI separates from zero, confirming our finding that context overhead is not distinguishable from noise.
$\Delta_{\mathrm{shd}}^{\sup}$ grows sharply with library size and
excludes zero under nearly all scenarios.

\subsection{Pass rates for different invocation events}\label{app:finer-partition}

\textbf{A finer partition of the invocation events.}
Definition~\ref{def:invocation-events}'s mixed event~\textsc{m}
admits two types of events: the agent invoked at least one oracle skill
alongside distractors, or it invoked only distractors.
Combined with \textsc{n} and \textsc{o}, this gives a four-way partition: 
\begin{itemize}
    \item ``oracle only'' (\textsc{o});
    \item ``mixed -- oracle invoked'';
    \item ``mixed -- oracle not invoked'';
    \item ``no skill invoked'' (\textsc{n}).
\end{itemize}

We report the binary pass rate for all four types of invocation events in Table~\ref{tab:rq4-2-binary} and the fractional pass rate
in Table~\ref{tab:rq4-2-fraction}.
CIs are Wilson $95\%$ for binary and $95\%$ percentile bootstrap for fractional.

\begin{table}[!ht]
  \centering
  \footnotesize
  \setlength{\tabcolsep}{10pt}
  \caption{Binary pass rates for different invocation events and
  library sizes, with Wilson $95\%$ CIs. Cells with fewer than $5$
  valid trajectories are ``---''.}
  \label{tab:rq4-2-binary}
  \vspace{0.1in}
  \begin{tabular}{@{}l l c c c c@{}}
    \toprule
    Invocation pattern & View & $p\;(\mathcal{S}^\star)$ & $p\;(|\mathcal{S}|{=}52)$ & $p\;(|\mathcal{S}|{=}102)$ & $p\;(|\mathcal{S}|{=}202)$ \\
    \midrule
    \multirow{3}{*}{\makecell[l]{oracle\\only}}
      & Haiku 4.5  & $.43\;[.37, .50]$ & $.33\;[.28, .40]$    & $.26\;[.21, .32]$   & $.34\;[.24, .45]$ \\
      & Sonnet 4.6 & $.62\;[.56, .68]$ & $.57\;[.50, .63]$    & $.58\;[.53, .63]$   & $.51\;[.43, .58]$ \\
      & Pooled     & $.52\;[.48, .57]$ & $.44\;[.40, .49]$    & $.45\;[.41, .49]$   & $.45\;[.39, .52]$ \\
    \addlinespace
    \multirow{3}{*}{\makecell[l]{mixed --\\oracle invoked}}
      & Haiku 4.5  & ---               & ---                  & $.33\;[.12, .65]$   & ---               \\
      & Sonnet 4.6 & ---               & $.00\;[.00, .43]$    & $.20\;[.04, .62]$   & $.13\;[.02, .47]$ \\
      & Pooled     & ---               & $.00\;[.00, .43]$    & $.29\;[.12, .55]$   & $.13\;[.02, .47]$ \\
    \addlinespace
    \multirow{3}{*}{\makecell[l]{mixed --\\oracle not invoked}}
      & Haiku 4.5  & ---               & $.00\;[.00, .35]$    & $.05\;[.01, .22]$   & $.00\;[.00, .23]$ \\
      & Sonnet 4.6 & ---               & $.00\;[.00, .32]$    & $.14\;[.06, .29]$   & $.13\;[.05, .32]$ \\
      & Pooled     & ---               & $.00\;[.00, .20]$    & $.10\;[.05, .21]$   & $.08\;[.03, .22]$ \\
    \addlinespace
    \multirow{3}{*}{\makecell[l]{no skill\\invoked}}
      & Haiku 4.5  & $.12\;[.06, .24]$ & $.11\;[.06, .22]$    & $.07\;[.04, .11]$   & $.03\;[.02, .07]$ \\
      & Sonnet 4.6 & ---               & $1.00\;[.77, 1.00]$  & $1.00\;[.87, 1.00]$ & $.80\;[.49, .94]$ \\
      & Pooled     & $.12\;[.06, .24]$ & $.27\;[.18, .38]$    & $.16\;[.12, .20]$   & $.07\;[.04, .12]$ \\
    \bottomrule
  \end{tabular}
\end{table}

\begin{table}[!ht]
  \centering
  \footnotesize
  \setlength{\tabcolsep}{10pt}
  \caption{Fractional pass rate by invocation event type and library size, with $95\%$ percentile-bootstrap CIs. Cells with
  fewer than $5$ valid trajectories are ``---''.}
  \label{tab:rq4-2-fraction}
  \vspace{0.1in}
  \begin{tabular}{@{}l l c c c c@{}}
    \toprule
    Invocation pattern & View & $\bar p\;(\mathcal{S}^\star)$ & $\bar p\;(|\mathcal{S}|{=}52)$ & $\bar p\;(|\mathcal{S}|{=}102)$ & $\bar p\;(|\mathcal{S}|{=}202)$ \\
    \midrule
    \multirow{3}{*}{\makecell[l]{oracle\\only}}
      & Haiku 4.5  & $.83\;[.81, .86]$ & $.77\;[.73, .80]$    & $.77\;[.73, .80]$    & $.80\;[.73, .85]$  \\
      & Sonnet 4.6 & $.91\;[.88, .93]$ & $.89\;[.87, .92]$    & $.91\;[.89, .93]$    & $.80\;[.74, .84]$  \\
      & Pooled     & $.87\;[.85, .89]$ & $.83\;[.80, .85]$    & $.85\;[.83, .86]$    & $.80\;[.75, .83]$  \\
    \addlinespace
    \multirow{3}{*}{\makecell[l]{mixed --\\oracle invoked}}
      & Haiku 4.5  & ---               & ---                  & $.92\;[.87, .98]$    & ---                \\
      & Sonnet 4.6 & ---               & ---                  & ---                  & $.89\;[.81, .94]$  \\
      & Pooled     & ---               & ---                  & $.91\;[.85, .97]$    & $.89\;[.81, .94]$  \\
    \addlinespace
    \multirow{3}{*}{\makecell[l]{mixed --\\oracle not invoked}}
      & Haiku 4.5  & ---               & ---                  & $.70\;[.60, .79]$    & ---                \\
      & Sonnet 4.6 & ---               & ---                  & $.85\;[.79, .90]$    & $.64\;[.40, .87]$  \\
      & Pooled     & ---               & ---                  & $.80\;[.75, .85]$    & $.64\;[.40, .87]$  \\
    \addlinespace
    \multirow{3}{*}{\makecell[l]{no skill\\invoked}}
      & Haiku 4.5  & $.48\;[.36, .59]$ & $.54\;[.45, .64]$    & $.45\;[.40, .49]$    & $.42\;[.37, .48]$  \\
      & Sonnet 4.6 & ---               & $1.00\;[1.00, 1.00]$ & $1.00\;[1.00, 1.00]$ & $.80\;[.50, 1.00]$ \\
      & Pooled     & $.47\;[.36, .59]$ & $.62\;[.53, .71]$    & $.50\;[.45, .55]$    & $.44\;[.39, .50]$  \\
    \bottomrule
  \end{tabular}
\end{table}

\begin{itemize}
    \item \textbf{Context overhead is visible within the oracle-only bucket.} For ``oracle only'' (event~\textsc{o})---trajectories where the agent invoked only oracle skills---the binary pass rate falls monotonically with library size on both models. This is the empirical signature of context overhead: the same invocation pattern produces a lower pass rate under a larger in-context library.

    \item \textbf{Fractional pass rate falls less than binary.}
Within ``oracle only'', the fractional pass rate drops by less than
the binary pass rate at every library size (e.g.\ pooled
$\bar p$: $.87\!\to\!.80$ vs.\ pooled $p$: $.52\!\to\!.45$ at
$|\mathcal{S}|{=}202$).
Context overhead removes whole-task correctness more than partial
progress.

    \item \textbf{Mixed and no-skill buckets are sample-limited.}
The two ``mixed'' buckets and ``no skill invoked'' draw on small
samples in many conditions; their high-variance pass rates should be
read alongside the cell counts in Table~\ref{tab:rq5-invoc}.
\end{itemize}

\subsection{Invocation event distributions}\label{app:invocation-dist}

We call $\pi_{\textsc{m}}$ (Definition~\ref{def:invocation-events}) the empirical shadowing rate: the
fraction of trajectories in which the agent invokes at least one distractor skill.

Table~\ref{tab:rq4-shadowing} reports $\pi_{\textsc{m}}$ (with Wilson $95\%$ CIs) on the selected (task, model) pairs, and we find that
Sonnet accumulates more wrong-selection mass than Haiku at every library size; Haiku's $\pi_{\textsc{m}}$ stays low because its mass
shifts to ``no skill invoked'' events instead of selecting wrong skills as shown in Table~\ref{tab:rq5-invoc} later on.

\begin{table}[!h]
  \centering
  \footnotesize
  \setlength{\tabcolsep}{15pt}
  \caption{Empirical shadowing rate $\pi_{\textsc{m}}$
  (Definition~\ref{def:invocation-events}) by library size, with
  Wilson $95\%$ CIs. $\mathcal{S}\!=\!\mathcal{S}^\star$ row is omitted because $\pi_{\textsc{m}}\!=\!0$ by construction.}
  \label{tab:rq4-shadowing}
  \vspace{0.1in}
  \begin{tabular}{@{}l ccc@{}}
    \toprule
    $|\mathcal{S}|$ & Haiku 4.5 & Sonnet 4.6 & Pooled \\
    \midrule
    $52$  & $.02\;[.01, .05]$ & $.06\;[.03, .09]$ & $.04\;[.02, .06]$ \\
    $102$ & $.06\;[.04, .08]$ & $.09\;[.07, .12]$ & $.07\;[.06, .09]$ \\
    $202$ & $.05\;[.03, .08]$ & $.14\;[.10, .19]$ & $.09\;[.07, .12]$ \\
    \bottomrule
  \end{tabular}
\end{table}

\begin{table}[!ht]
  \centering
  \scriptsize
  \setlength{\tabcolsep}{22pt}
  \caption{Trajectory-share distribution across the four invocation
  patterns (refinement of Definition~\ref{def:invocation-events}'s
  \textsc{n}/\textsc{m}/\textsc{o} partition), by library size.}
  \label{tab:rq5-invoc}
  \vspace{0.1in}
  \begin{tabular}{@{}l c rrrr r@{}}
    \toprule
    & $|\mathcal{S}|$ & \makecell{oracle\\only} & \makecell{mixed ---\\oracle\\invoked} & \makecell{mixed ---\\oracle not\\invoked} & \makecell{no skill\\invoked} & $n$ \\
    \midrule
    \multirow{4}{*}{Haiku 4.5}
      & $\mathcal{S}^\star$ & $80.9\%$ & $0.0\%$ & $0.0\%$  & $19.1\%$           & $314$ \\
      & $52$                & $74.5\%$ & $0.0\%$ & $2.3\%$  & $23.2\%$           & $306$ \\
      & $102$               & $47.8\%$ & $1.6\%$ & $4.0\%$  & $46.5\%$           & $546$ \\
      & $202$               & $29.3\%$ & $0.0\%$ & $4.8\%$  & $65.9\%$           & $273$ \\
    \addlinespace
    \multirow{4}{*}{Sonnet 4.6}
      & $\mathcal{S}^\star$ & $97.5\%$ & $0.0\%$ & $0.0\%$  & $\phantom{0}2.5\%$ & $238$ \\
      & $52$                & $87.7\%$ & $2.1\%$ & $3.4\%$  & $\phantom{0}6.8\%$ & $236$ \\
      & $102$               & $84.8\%$ & $1.1\%$ & $8.1\%$  & $\phantom{0}5.9\%$ & $442$ \\
      & $202$               & $81.4\%$ & $3.6\%$ & $10.4\%$ & $\phantom{0}4.5\%$ & $221$ \\
    \addlinespace
    \multirow{4}{*}{Pooled}
      & $\mathcal{S}^\star$ & $88.0\%$ & $0.0\%$ & $0.0\%$  & $12.0\%$           & $552$ \\
      & $52$                & $80.3\%$ & $0.9\%$ & $2.8\%$  & $16.1\%$           & $542$ \\
      & $102$               & $64.4\%$ & $1.4\%$ & $5.9\%$  & $28.3\%$           & $988$ \\
      & $202$               & $52.6\%$ & $1.6\%$ & $7.3\%$  & $38.5\%$           & $494$ \\
    \bottomrule
  \end{tabular}
\end{table}

Table~\ref{tab:rq5-invoc} reports the per-bucket trajectory shares.
The two models exhibit qualitatively different transitions as the
library grows.
On Haiku, ``oracle only'' (event~\textsc{o}) collapses and the
displaced mass goes almost entirely into ``no skill invoked''
(event~\textsc{n}); the two mixed buckets stay small at every size.
Haiku's failure mode under expansion is therefore
\emph{abandonment}: rather than picking the wrong skill, the agent
picks no skill.
On Sonnet, event~\textsc{o} declines more gently, event~\textsc{n}
stays small, and the displaced mass goes into the two ``mixed''
sub-cases of event~\textsc{m}.
Sonnet's failure mode is \emph{wrong selection}: the agent keeps
invoking skills, but increasingly the wrong ones.
This per-model contrast underlies the decomposition pattern in
Table~\ref{tab:rq2-decomp}.

\end{document}